\newtheorem{proposition}{Proposition}[section]
\newtheorem{definition}{Definition}[section]
\def\eavg{\epsilon_{avg}}
\def\emin{\epsilon_{min}}
\def\efrac{\epsilon_{frac}}
\def\prn{\textsf{)}}
\def\Th{\textit{Th}}
\def\cala{\mathcal{A}}
\def\tm{\textit{t}_{\max}}
\def\R{\textbf{R}}
\newfont{\mycrnotice}{ptmr8t at 7pt}
\newfont{\myconfname}{ptmri8t at 7pt}
\begin{document}

\title{Mining for Causal Relationships: A Data-Driven Study of the Islamic State}
\numberofauthors{6} 
\author{
Andrew Stanton, Amanda Thart, Ashish Jain, Priyank Vyas, Arpan Chatterjee, Paulo Shakarian\thanks{Paulo Shakarian is also affiliated with ASU Center on the Future of War.}\\
       \affaddr{Arizona State University}\\
			 \affaddr{Tempe, AZ 85287}\\
       \email{\{dstanto2, althart, ashish.jain.1, pvyas1, achatt14, shak\} @asu.edu}	
}

\maketitle
\begin{abstract}
The Islamic State of Iraq and al-Sham (ISIS) is a dominant insurgent group operating in Iraq and Syria that rose to prominence when it took over Mosul in June, 2014.  In this paper, we present a data-driven approach to analyzing this group using a dataset consisting of $2200$ incidents of military activity surrounding ISIS and the forces that oppose it (including Iraqi, Syrian, and the American-led coalition).  We combine ideas from logic programming and causal reasoning to mine for association rules for which we present evidence of causality.  We present relationships that link ISIS vehicle-bourne improvised explosive device (VBIED) activity in Syria with military operations in Iraq, coalition air strikes, and ISIS IED activity, as well as rules that may serve as indicators of spikes in indirect fire, suicide attacks, and arrests.
\end{abstract}
\section{Introduction}
Since its rise to prominence in Iraq and Syria in June, 2014, The Islamic State of Iraq and al-Sham (ISIS) -- also known as The Islamic State of Iraq and the Levant (ISIL) or simply the Islamic State, has controlled numerous cities in Sunni-dominated parts of Iraq and Syria.  ISIS has displayed a high level of sophistication and discipline in its military operations in comparison to similar insurgent groups, which perhaps may be the source of its success.  We have meticulously encoded and recorded $2200$ incidents of military activity conducted by ISIS and forces that oppose it (including Iraqi, Syrian, and the American-led coalition) in a relational database.  Our goal was to achieve a better understanding of how this group operates - which can lead to new strategies for mitigating ISIS's operations.  Specifically, we sought to analyze the behavior of ISIS using concepts from logic programming (in particular APT logic~\cite{apt11,apt12}) and causal reasoning~\cite{kb09,suppes70}.  By combining ideas from these fields, we have been able to conduct a thorough search for rules whose precondition consists of multiple atomic propositions, and we provide evidence of causality by comparing rules with the same consequence.  So, in addition to considering the probability of a rule ($p$), we also study a measure of its causality denoted $\eavg$ (previously introduced in~\cite{kb09}) -- which, informally, can be thought of as the average increase in probability a rule's precondition provides when considered with each of the comparable rules.  Using this approach, we have found interesting relationships - consider the following:
\begin{itemize}
\item Weeks where ISIS conducts infantry operations in Iraq that are accompanied by indirect fire are indicative of vehicle-bourne improvised explosive device (VBIED) operations in Syria in the following week ($p=1.0,$ $\eavg=0.92$).
\item Weeks in which ISIS conducts operations in Tikrit and conducts a significant number of executions are followed by a large spike in improvised explosive device (IED) usage in Iraq and Syria combined ($p=1.0, \eavg=0.97$)
\item Air strikes by the Syrian government are followed by mass arrests by ISIS in the following week ($\eavg=0.91, p=0.67$), and such massive arrests were \textit{always} proceeded by Syrian air strikes in our dataset.
\item In the week after coalition air strikes are conducted against Mosul while ISIS is conducting operations in Al-Anbar province, ISIS greatly increases its IED activity in Iraq ($p=0.67, \eavg=0.97$).  However, if there are also significant ISIS operations occurring in Syria, the increase in IED usage experienced there is ($p=0.67, \eavg=0.79$).
\end{itemize}
Our findings have also led us to several interesting theories about ISIS behavior that we have developed as a result of this data mining effort.  These include the following:
\begin{itemize}
\item ISIS may employ suicide VBIED operations in Baghdad prior to significant infantry operations in other locations to prevent the deployment of Iraqi army/police reinforcements.
\item ISIS tends to leverage indirect fire (IDF) as a precursor to infantry operations - more in keeping with a traditional military force as opposed to a primary use of IDF for harassment purposes (as was typically seen by insurgent groups during Operation Iraqi Freedom).
\item As we found relationships between coalition air operations and an increase in ISIS usage of IEDs - and not other, larger weapons systems (i.e. VBIEDs), this may indicate that ISIS resorts to more distributed, insurgent-style tactics in the aftermath of such operations.
\end{itemize}
To our knowledge, this study represents both the first publicly-available, data-driven study of ISIS as well as the first combination of APT logic with the causality ideas of \cite{kb09}.  The rest of the paper is organized as follows.  In Section~\ref{appr}, we describe our approach and recall key concepts from APT logic and causal reasoning.  This is followed by a description of our corpus of military events surrounding the actions of ISIS from June-December, 2014 in Section~\ref{dataset}.  In Section~\ref{resSec}, we describe our implementation and discuss our results.  Finally, related work is reviewed in Section~\ref{rwSec}.

\section{Technical Approach}
\label{appr}
\subsection{APT Logic}
We now present a subset of our previously-introduced APT logic \cite{apt11,apt12}.  Our focus here is on the syntax of the language utilized with an alternate semantics that is based on the rule-learning approach described in \cite{apt11}.  This is due to the fact that we are only concerned with learning the rules in this paper and determining their level of causality -- we leave problems relating to deduction (i.e. the entailment problem studied in \cite{apt12}) to  future work.  APT logic considers a two-part semantic structure: threads (sequences of events) and interpretations (probability distributions over such sequences).  Initially, we focus on a semantics restricted to threads.

We assume the existence of a set of ground atoms, $\cala$ which in this case correspond with events over time and use the symbol $n$ to denote the size of this set.  We shall partition the ground atoms into two subsets: action atoms $\cala_{act}$ and environmental atoms $\cala_{env}$. Action atoms describe actions of certain actors, and environmental atoms describe aspects of the environment.  We will use $n_{act},n_{env}$ to denote the quantity of ground atoms in $\cala_{act},\cala_{env}$, respectively.  We can connect atoms using $\neg,\wedge,\vee$ to create formulas in the normal way.  A \textit{world} is a subset of atoms that are considered to be true.  A \textit{thread} is a series of worlds, each corresponding to a discrete time-point, which are represented by simple natural numbers in the range $1,\ldots,\tm$.  We shall use $\Th[t]$ to refer to the world specified by $\Th$ at time $t$.  A thread ($\Th$) at time $t$ \textit{satisfies} a formula $f$  (denoted $\Th[t]\models f$) by the following recursive definition:
\begin{itemize}
\item{For some $a \in \cala$, $\Th[t]\models a$ iff $a \in \Th[t]$}
\item{For $f = \neg f'$, $\Th[t]\models f$ iff $\Th[t]\models f'$ is false}
\item{For $f=f'\wedge f''$, $\Th[t]\models f$ iff $\Th[t]\models f'$ and $\Th[t] \models f''$}
\item{For $f=f' \vee f''$, $\Th[t] \models f$ iff $\Th[t]\models f'$ or $\Th[t] \models f''$}
\end{itemize}

We also note that based on how often a given formula occurs within a thread, we can obtain a prior probability for that formula.  For instance, consider formula $g$. We can compute its prior probability (w.r.t. thread $\Th$) as the fraction of time points where $g$ is satisfied:
\begin{equation}
\label{rhComp}
\rho = \dfrac{|\{t | \Th[t]\models g\}|}{\tm}
\end{equation}
Note that we use the notation $|\cdot|$ for the cardinality of a set. Our goal in this section is for some event $g$ to identify previously occurring event $c$ such that the conditional probability of $g$ occurring given $c$ is greater than $p_g$.  In this paper, we shall only look at finding APT rules where $c$ occurs in the time period right before $g$.  The case where multiple time periods elapse between $c$ and $g$ is left to future work.  Hence, we introduce an APT rule of the following form: $c \leadsto_p g$, which intuitively means that ``$c$ is followed by $g$ in one time-step with probability $p$.''  We shall refer to $c$ as the precondition and $g$ as the consequence.  We say thread $\Th \models c \leadsto_p g$ iff:
\begin{equation}
\label{pComp}
p= \frac{|\{t \textit{ s.t. } \Th[t]\models c \textit{ and } \Th[t+1] \models g \}|}{|\{t \textit{ s.t. }\Th[t] \models c\}|}
\end{equation}
This alternate definition of semantics is not new: it is a variant of rule satisfaction with the existential frequency function of \cite{apt11} and also of ``trace semantics'' used for a PCTL variant in \cite{kb11}.  In this paper, we are focused on learning rules where $c$ is a conjunction of positive atoms and $g$ is a single positive atom.  The number of atoms in the conjunction $c$ is referred to as the ``dimension'' of the rule $c\leadsto_p g$.  We will also be concerned with two other measures of a given rule: the ``negative probability'' (used to describe the probabilistic rules of \cite{let13}) and the notion of support (a standard concept in association rule learning).  First, we define ``negative probability'' -- denoted $p^{*}$ -- for a given rule of the aforementioned format.
\begin{equation}
\label{psComp}
p^{*}= \frac{|\{t \textit{ s.t. } \Th[t]\models g \textit{ and } \Th[t-1] \not\models c \}|}{|\{t \textit{ s.t. }\Th[t] \models g\}|}
\end{equation}
Simply put, $p^{*}$ is the number of times that the consequence of a rule occurs without the head occurring prior.  We can think of $p$ as a measure of precision of a given rule while $p^{*}$ is more akin to a measure of recall.  Next, we formally define the notion of support ($s$) as follows.
\begin{equation}
\label{sComp}
s= |\{t \textit{ s.t. }\Th[t] \models c\}|
\end{equation}
For a given rule $r$, we shall use the notation $p_r, p^{*}_r, s_r$, respectively.  We will also use the notation $\rho_r$ to denote the prior probability of the consequence.  When it is obvious from context which rule we are referring to (as in the experimental results section), we will drop the subscript.

\subsection{Determining Causality}
\label{deteCauseSec}
Next, we describe how for a given action, $g$, we identify potential causes based on a set of rules for which $g$ is the consequence.  In considering possible causes, we must first identify a set of \textit{prima facie} causes for $g$ \cite{kb09,suppes70}.  We present the definition in terms of APT logic below.

\begin{definition}[Prima Facie Cause~\cite{kb09,suppes70}]
\label{pfcDef}
Given\\ APT logic formulas $c,g$ and thread $\Th$, we say $c$ is a \textbf{prima facie} cause for $g$ w.r.t. $\Th$ if:
	\begin{enumerate}
	\item There exists time $t$ such that $\Th[t] \models g$ ($g$ occurs with a probability greater than zero)
	\item There exists $t,t'$ where $t<t'$ such that $\Th[t]\models c$ and $\Th[t']\models g$ ($c$ occurs before $g$)
	\item For $r \equiv c\leadsto_p g$ where $\Th\models r$, we have $p> \rho_r$ (the probability of the consequence occurring after the precondition is greater than the prior probability of the consequence)
	\end{enumerate}
	\end{definition}

As we will assume the existence of a single thread, $\Th$ (which is our historical corpus of data), we will often use the language ``prima facie causal rule'' or ``PF-rule'' to describe a rule $c \leadsto_p g$ where $c$ is a prima facie cause for $g$ w.r.t. $\Th$.\smallskip

Next, we adopt the method of \cite{kb09} to determine if an APT rule is causal.  First, in determining if a given PF-rule is causal, we must consider other, related PF-rules.  Intuitively, a given rule $r \equiv c \leadsto_{p} g$ explains why some instances of $g$ occur within a thread.  Another rule $r' \equiv c' \leadsto_{p'} g$ is related if it also explains why some of those same instances occur.  Formally, we say $r$ and $r'$ are related if there exists $t$ such that $\Th[t]\models c\wedge c'$ and $\Th[t+1] \models g$.  Hence, we will define related PF-rules (for a given rule $r \equiv c \leadsto_p g$) as follows.
\[
\R(r) = \{r' \textit{ s.t. } r' \not\equiv r \textit{ and } r,r' \textit{ are related}\}
\]
Next, we will look at how to compare two related rules.  The key purpose from \cite{kb09} behind doing so is to study how the probability of the rule changes in cases where both preconditions co-occur in comparison to the probability where exactly one of the preconditions occurs.  So, given rules $r,r'$ as defined above, we have the following notation: $p_{r,r'}$ and $p_{\neg r,r'}$, which are defined as the point probabilities that will cause the following two rules to be satisfied by $\Th$.
\begin{eqnarray*}
c \wedge c' \leadsto_{p_{r,r'}}g\\
\neg c \wedge c' \leadsto_{p_{\neg r, r'}} g
\end{eqnarray*}
So, $p_{r,r'}$ is the probability that $g$ occurs given both preconditions, and $p_{\neg r, r'}$ is the probability that $g$ occurs given just the precondition of the second rule and not of the first.  The idea is that if $p_{r,r'}-p_{\neg r, r'} >0$, then there is something about the precondition of $r$ that causes $g$ to occur that is not present in $r'$.  Following directly from \cite{kb09}, we measure the average of this quantity to determine how causal a given rule is as follows:
\begin{eqnarray*}
\label{eavgEqn}
\eavg &=& \dfrac{\sum_{r' \in \R(r)}p_{r,r'}-p_{\neg r, r'}}{|\R(r)|}
\end{eqnarray*}

Intuitively, $\eavg(r)$ measures the degree of causality exhibited by rule $r$.  Additionally, using this same intuition, we find it useful to include a few other related measures when examining causality.  First, we define $\emin$, defined below:
\begin{eqnarray*}
\emin &=& \min_{r' \in \R(r)}(p_{r,r'}-p_{\neg r, r'})
\end{eqnarray*}
This tells us the ``least causal'' comparison of $r$ with all related rules.  Another measure we will use is $\efrac$, defined as follows:
\begin{eqnarray*}
\efrac(r) &=& \dfrac{|\{r' \textit{ s.t. } (p_{r,r'}-p_{\neg r, r'})\geq 0 \}|}{|\R(r)|}
\end{eqnarray*}
This provides a fraction of the related rules whose probability remains the same or decreases if the precondition of $r$ is not present.  Hence, a number closer to $1$ likely indicates that $r$ is more causal.  By using multiple causality measurements, we can have a better determination of more significant causality relationships.

\subsection{Algorithms}

Next, given a thread, we provide a variant of the APT-Extract algorithm~\cite{apt11} that we call PF-Rule-Extract.  Essentially, this algorithm generates all possible $\cala_{env}$ up to a certain size (specified by the argument $MaxDim$) and then searches for correlation.  It returns rules whose precondition occurs a specified number of times (a lower bound on support for a rule) - specified with the argument $SuppLB$.  We make several modifications specific to our application to support reasoning. 
\begin{itemize}
\item We reduce the run-time of the algorithm considerably over APT-Extract.  As APT-Extract explores all possible combinations  of atoms up to size $MaxDim$, it runs in time $O({n_{env} \choose MaxDim})$.  However, PF-Rule-Extract only examines combinations of atoms that occur in a given time period -- hence, reducing run-time to  $O({\max_t(n_t) \choose MaxDim})$ where $n_t$ is the number of atoms in $\cala_{env}$ true at time $t$.  We find in practice that $\max_t(n_t) << n_{env}$.
\item We further reduce the run-time by not considering elements of $\cala_{env}$ that occur less than the lower bound on support - as they would never appear in a rule.
\item We guarantee that all rules returned by PF-Rule-Extract are PF-rules.
\end{itemize}

\algsetup{indent=1em}
	\begin{algorithm}[h!]
		\caption{ \textsf{PF-Rule-Extract}}
		\label{csd_alg}
		\begin{algorithmic}[1]
		\REQUIRE Thread $\Th$, sets of atoms $\cala_{act},\cala_{env}$, positive natural numbers $MaxDim, SuppLB$, real number $minProb$\\
		\ENSURE Set of rules $ \textbf{R}  $
		\medskip
		\STATE{Set $\textbf{R}=\emptyset$}
		\FOR{$g \in \{a \in \cala_{act} \textit{ s.t. } \exists t \textit{ where } \Th[t] \models a\}$}
			\STATE{$preCond(g) = \emptyset$}
			\FOR{$t \in \{1,\ldots,\tm \textit{ s.t. } \Th[t+1]\models g\}$} 
				\STATE{Let $X$ be the set of all combinations of size $MaxDim$ (or less) of elements in $\Th[t]\cap \cala_{env}$ that occur at least $SuppLB$ times.}
				\STATE{$preCond(g) = preCond(g)\cup X$}
			\ENDFOR
			\FOR{$c \in preCond(g)$}
				\STATE{Compute $\rho, p, s$,  as per Equations~\ref{rhComp},\ref{pComp}, and \ref{sComp}} 
					\IF{$(s \geq SuppLB) \wedge (p > \rho) \wedge (p\geq minProb)$}
							\STATE{$\textbf{R}=\textbf{R}\cup \{r\}$}
					\ENDIF
			\ENDFOR
		\ENDFOR
		\RETURN{Set $\textbf{R}$}
			\end{algorithmic}
\end{algorithm}

Note that APT-Extract allows for multiple time periods between preconditions and consequences -- PF-Rule-Extract can also be easily modified to find rules of this sort through a simple modification of line 4.  Next, we provide some formal results to show that PF-Rule-Extract finds rules that meet the requirements of Definition~\ref{pfcDef} and show that it explores all possible combinations up to size $MaxDim$ that are supported by the data that meet the requirement for minimum support even with our efficiency improvements.

\begin{proposition}
\label{satProp}
Given thread $\Th$ as input, PF-Rule-Extract produces a set of rules $\textbf{R}$ such that each $r \in R$ is a prima facie causal rule.
\end{proposition}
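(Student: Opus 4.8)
\medskip
\noindent\textbf{Proof proposal.}
The plan is to fix an arbitrary rule $r \equiv c \leadsto_p g$ that PF-Rule-Extract returns in $\textbf{R}$ and to check the three clauses of Definition~\ref{pfcDef} for it, reading each one off the control flow $r$ must have survived in order to be inserted into $\textbf{R}$. The first step is therefore to record the structural facts that this control flow forces: (i) $g$ is drawn from the outer loop over candidate consequences, so $g \in \cala_{act}$ and there is a time $t$ with $\Th[t] \models g$; (ii) $c$ was placed in $preCond(g)$ during the iteration of the loop at line~4 indexed by some $t$, so $c$ is a conjunction of atoms each belonging to $\Th[t] \cap \cala_{env}$, and $\Th[t+1] \models g$; and (iii) $r$ passed the conditional test guarding the insertion, so the quantities $\rho, p, s$ computed just above it via Equations~\ref{rhComp}, \ref{pComp}, and \ref{sComp} (for the formula $g$, the rule $c \leadsto g$, and the precondition $c$, respectively) satisfy $s \ge SuppLB$, $p \ge minProb$, and $p > \rho$.

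With these facts in hand, each clause of Definition~\ref{pfcDef} should follow directly. Clause~1 is exactly fact~(i). For clause~2 I would invoke fact~(ii): every conjunct of $c$ lies in $\Th[t]$, so the satisfaction rule for $\wedge$ gives $\Th[t] \models c$, and since also $\Th[t+1] \models g$, the pair $t < t' := t+1$ witnesses clause~2. (One bookkeeping point belongs here: the loop at line~4 must be read as ranging only over $t$ with $t+1 \le \tm$, so that $\Th[t+1]$ is a genuine world of $\Th$; this is implicit in the way $\Th[t+1]$ appears there and in Equation~\ref{pComp}.) For clause~3 I would argue in two steps. First, the value $p$ computed by the algorithm is, by Equation~\ref{pComp}, precisely the unique probability for which $\Th \models c \leadsto_p g$ --- its denominator $|\{t : \Th[t]\models c\}|$ is nonzero because fact~(ii) exhibits such a $t$ (equivalently because $s \ge SuppLB \ge 1$) --- hence $\Th \models r$, which is the hypothesis under which clause~3 is phrased. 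Second, the value $\rho$ computed by the algorithm is Equation~\ref{rhComp} applied to $g$, i.e.\ the prior probability of the consequence, which is by definition $\rho_r$. So the guard $p > \rho$ from fact~(iii) is literally the inequality $p > \rho_r$ demanded by clause~3. Having established all three clauses, $r$ is a prima facie causal rule; since $r$ was an arbitrary element of $\textbf{R}$, this proves the proposition.

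I do not anticipate a real difficulty, since the argument is essentially a walk through the pseudocode, but two points deserve attention. The first is identifying the algorithm's $\rho$ with the \emph{definitional} $\rho_r$ --- one must confirm it is the prior of the \emph{consequence} and of nothing else, because that identification is exactly what ties the insertion guard to clause~3 of Definition~\ref{pfcDef}. The second is the boundary convention noted above, namely making sure $\Th[t+1]$ is defined wherever line~4 and Equation~\ref{pComp} use it. I would also observe, for completeness, that the extra conjuncts $s \ge SuppLB$ and $p \ge minProb$ in the guard are irrelevant to the prima facie property itself --- they are the support and minimum-probability filters described in the text preceding the algorithm --- so the proposition rests only on $p > \rho$ together with facts~(i) and~(ii).
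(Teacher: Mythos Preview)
Your proposal is correct and follows essentially the same approach as the paper: verify each clause of Definition~\ref{pfcDef} by pointing to the corresponding line of the pseudocode (line~2 for clause~1, line~4/5 for clause~2, the guard at line~10 for clause~3). Your write-up is more careful than the paper's own proof---in particular the remarks on identifying the computed $\rho$ with $\rho_r$ and on the boundary convention for $t{+}1$ are useful hygiene that the paper glosses over---but the underlying argument is the same.
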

\begin{proof}
The first requirement for a prima facie cause is met by line 2 of PF-Rule-Extract, as the algorithm only considers consequences that have occurred at least once in $\Th$ -- hence, they have a prior probability greater than zero.  The second requirement is met by line 4, as we only consider atoms that occurred immediately before the consequence when constructing the precondition.  Finally, the third condition is met by the if statement at line 10 which will only select rules whose probability is greater than the prior probability of the consequence. 
\end{proof}

\begin{proposition}
\label{allProp}
PF-Rule-Extract finds all PF-rules whose precondition is a conjunction of size $MaxDim$ or a size containing fewer atoms and where both the precondition occurs at least $SuppLB$ times in $\Th$ and the probability is greater than $minProb$.
\end{proposition}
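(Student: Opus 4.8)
The plan is to prove completeness directly. I would fix an arbitrary PF-rule $r \equiv c \leadsto_p g$ that meets the hypotheses --- so $c$ is a conjunction of at most $MaxDim$ atoms of $\cala_{env}$, $g$ is a single atom of $\cala_{act}$, the support $s_r \geq SuppLB$, and $p > minProb$ --- and show that PF-Rule-Extract places $r$ into its output set $\textbf{R}$. Concretely, I would trace $r$ through the three nested loops of the algorithm and verify that each gate it must clear is in fact cleared: survival of $g$ in the outer loop on line~2, membership of $c$ in $preCond(g)$ after the loop on lines~4--7, and passing the test on line~10.

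The first gate is immediate: because $r$ is a PF-rule, condition~1 of Definition~\ref{pfcDef} gives some $t$ with $\Th[t]\models g$, so $g$ lies in the set $\{a\in\cala_{act} : \exists t,\ \Th[t]\models a\}$ iterated on line~2. The crux is the second gate, namely showing $c \in preCond(g)$. Here I would use condition~3 of Definition~\ref{pfcDef}: $p > \rho_r \geq 0$, so $p > 0$, which forces the existence of a time $t^{*}$ with $\Th[t^{*}]\models c$ and $\Th[t^{*}+1]\models g$; hence $t^{*}$ is visited on line~4. Since $c$ is a conjunction of positive environmental atoms, $\Th[t^{*}]\models c$ means every atom of $c$ lies in $\Th[t^{*}]\cap\cala_{env}$. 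It remains to check that each such atom survives the support filter built into line~5: for any atom $a$ occurring in $c$, monotonicity of satisfaction for conjunctions of positive atoms gives the inclusion $\{t : \Th[t]\models c\}\subseteq\{t : a\in\Th[t]\}$, so $a$ occurs at least $s_r \geq SuppLB$ times in $\Th$. Therefore the size-$(\leq MaxDim)$ combination $c$ is among the combinations enumerated into $X$ at $t=t^{*}$, and so $c$ is added to $preCond(g)$.

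The third gate is then routine. When $c$ is drawn from $preCond(g)$ on line~8, the algorithm recomputes $\rho=\rho_r$, $p=p_r$, $s=s_r$ exactly per Equations~\ref{rhComp}--\ref{sComp}. By hypothesis $s_r \geq SuppLB$ and $p_r > minProb$ (hence $p_r \geq minProb$), and by condition~3 of Definition~\ref{pfcDef} $p_r > \rho_r$; so the conjunction on line~10 holds and $r$ is added to $\textbf{R}$. Since $r$ was an arbitrary PF-rule satisfying the stated conditions, PF-Rule-Extract finds all of them.

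I expect the main obstacle to be the middle step, i.e.\ arguing that the two efficiency prunings lose no valid rule: the ``only look at time steps immediately before an occurrence of $g$'' pruning is justified by the fact that a rule with nonzero probability must fire at least once, and the ``discard atoms with support below $SuppLB$'' pruning is justified by the subset/monotonicity argument above. One should also be mildly careful about the boundary indexing (the behaviour at $t=\tm$ and the role of $\Th[t+1]$) and about the tacit assumption that the proposition concerns exactly the rule class the algorithm targets --- precondition over $\cala_{env}$, consequence a single atom of $\cala_{act}$ --- both of which are minor and can each be dispatched in a sentence.
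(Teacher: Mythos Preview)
Your proposal is correct and follows essentially the same approach as the paper: both argue directly that the consequence is iterated, that each atom of the precondition individually meets the $SuppLB$ threshold by monotonicity of conjunctive satisfaction (so line~5 enumerates $c$), and that the test on line~10 then passes by hypothesis. Your version is in fact more careful than the paper's, since you explicitly justify via $p>0$ that some $t^{*}$ with $\Th[t^{*}]\models c$ and $\Th[t^{*}+1]\models g$ is actually visited by the loop on line~4, a point the paper glosses over.
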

\begin{proof}
By Proposition~\ref{satProp}, every consequence that could be used in a PF-rule is considered, so we need only to consider the precondition.  Suppose that there is a PF-rule whose precondition occurs at least $SuppLB$ times that has a size of $MaxDim$ or smaller.  Hence, the precondition is comprised of $m \leq MaxDim$ atoms: $a_1,\ldots,a_m$.  Clearly, as this precondition occurs at least $SuppLB$ times in $\Th$, each of $a_1,\ldots,a_m$ must also occur $SuppLB$ times by line 5 - so that the precondition must be considered at that point.  The only condition that would prevent the rule from being returned is line 10, but again, clearly these are met by the statement of the proposition.  Hence, we have a contradiction.
\end{proof}

Once we have identified the PF-rules using PF-Rule-Extract, we must then compare related rules with each other.  The  algorithm PF-Rule-Compare returns the top $k$ rules for each consequence.  Again, here we take advantage of our specific application to reduce the run-time of this comparison.  In particular, for a given rule, we need not compare it to all the rules returned by PF-Rule-Extract; we only need to compare it to those that share the consequence.  By sorting the rules by consequence - a linear time operation (line 3), we are able to reduce the cost of the quadratic operation for the rule comparisons (a brute-force method would take $O(|\textbf{R}|^2)$ while this method requires $O((\max_g|\textbf{R}_g|)^2+|\textbf{R}|)$ -- and we have observed that $\max_g|\textbf{R}_g| << |\textbf{R}|$).

\algsetup{indent=1em}
	\begin{algorithm}[h!]
		\caption{ \textsf{PF-Rule-Compare}}
		\label{compar_alg}
		\begin{algorithmic}[1]
		\REQUIRE Set of rules $\textbf{R}$, natural number $k$ \\
		\ENSURE Set of rules $ \textbf{R}'  $
		\medskip
		\STATE{Set $\textbf{R}'=\emptyset$.}
		\FOR{$g \in \cala_{act}$}
			\STATE{Set $\textbf{R}_g = \{ c \leadsto_p g \in \textbf{R}\}$}
			\FOR{$r \in \textbf{R}_g$}
				\STATE{Calculate $\eavg$ for rule $r$ (Equation~\ref{eavgEqn} by comparing it to all related rules in $\textbf{R}_g\setminus\{r\}$})
			\ENDFOR
			\STATE{From the set $\textbf{R}_g$, add the top $k$ rules by $\eavg$ to the set $\textbf{R}'$}
		\ENDFOR
		\RETURN{Set $\textbf{R}'$}
			\end{algorithmic}
\end{algorithm}

\section{ISIS Dataset}
\label{dataset}
We collected data on $2200$ military events that occurred from June 8th through December 31st, 2014 that involved ISIS and forces opposing ISIS.  Events were classified into one of $159$ event types - and these events were used as predicates for APT logic atoms.  We list a sampling of the predicates we used in Table~\ref{exAt}.  Many predicates are unary because they correspond with ISIS actions (i.e. \textsf{armedAtk}) while those dealing with operations by other actors (i.e. \textsf{airOp}) and predicates denoting spikes in activity (i.e. \textsf{VBIEDSpike}) accept more than one argument.  Nearly every predicate has one argument that corresponds to the location in which the associated event took place. See Figure \ref{fig:map} for a sampling of locations in the ISIS dataset.\smallskip

We obtained our data primarily from reports published by the Institute for the Study of War (ISW) \cite{isw14}, and we augmented it with other reputable sources including MapAction \cite{ma14i,ma14s}, Google Maps \cite{ggmps}, and Humanitarian Response \cite{hr14}.  We devised a code-book as well as coding standards for events, and one of our team members functioned as a quality-control to reduce errors in human coding.

\begin{figure}[htp!]
	\vspace{-1em}
	\caption{\textmd{Map illustrating cities associated with events in the ISIS dataset.}}
		\label{fig:map}
	\includegraphics[width=8.5cm]{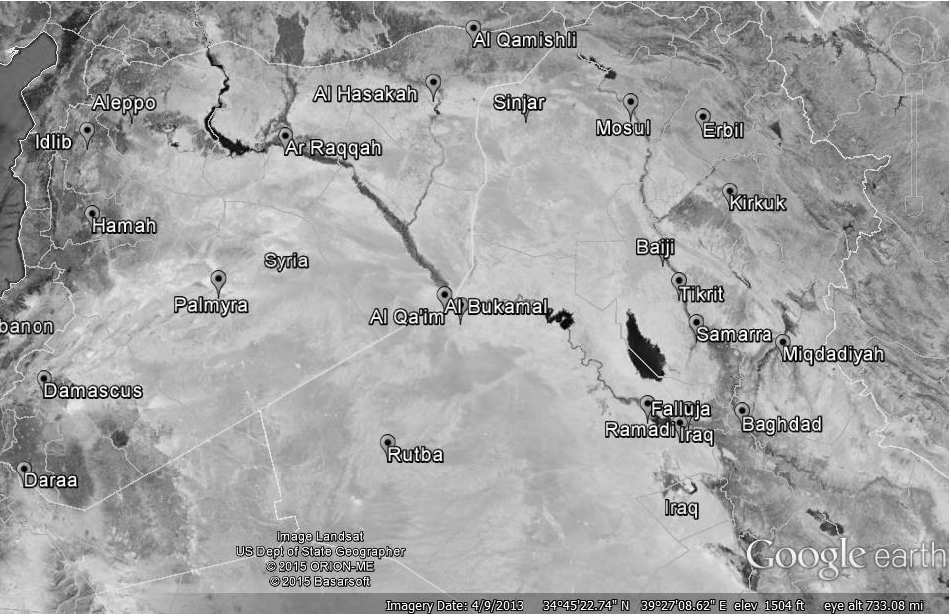}
\end{figure}

\begin{table}[h!]
	\caption{\textmd{A sampling of predicate symbols used to describe events in the ISIS Dataset}}
	\label{exAt}
	\centering
	\renewcommand{\arraystretch}{1.5}
	
	\begin{tabular}{|p{3cm}|p{5cm}|} 
		\hline
		{\bf Predicate} &  {\bf Intuition} \\ \hline 
		\textsf{airOp(}$X,Y$\textsf{)} & Actor $X$ (typically ``Coalition,'' ``Syrian Government,'' or ``U.S.'') conducts an air campaign against ISIS in the vicinity of city $Y$.\\ \hline
				\textsf{armedAtk(}$X$\textsf{)} & ISIS conducts an armed attack (a.k.a. infantry operation) in city $X$.\\ \hline
				\textsf{IED(}$X$\textsf{)} & ISIS conducts an attack using an improvised explosive device (IED) in city $X$.\\ \hline
		\textsf{indirectFire(}$X$\textsf{)} & ISIS conducts an indirect fire operation (i.e. mortars) near city $X$.\\ \hline
		\textsf{VBIED(}$X$\textsf{)} & ISIS conducts an attack using a vehicle-bourne improvised explosive device (VBIED or car-bomb) in city $X$.\\ \hline\hline
		\textsf{armedAtkSpike(}$X,Y$\textsf{)} & There is a spike in ISIS armed attacks in country $X$ that is $Y$ amount over the 4-month moving average ($Y$ is expressed in terms of moving standard deviations).\\ \hline
				\textsf{VBIEDSpike(}$X,Y$\textsf{)} & There is a spike in ISIS VBIEDs in country $X$ that is $Y$ amount over the 4-month moving average ($Y$ is expressed in terms of moving standard deviations).\\ \hline

				\hline
	\end{tabular}
	
\end{table}

\section{Results and Discussion}
\label{resSec}
\subsection{Experimental Setup}
We implemented our PF-Rule-Extract and PF-Rule-Compare in Python 2.8x and ran it on a commodity machine equipped with an Intel Core i5 CPU (2.7 GHz) with 16 GB of RAM running Windows 7.  The time periods we utilized were weeks - hence, we had $30$ time periods in our thread.  We had $980$ distinct environmental atoms ($\cala_{env}$).  Our action atoms ($\cala_{act}$) corresponded to weeks where the number of incidents for certain activity rose to or past one or two moving standard deviations (denoted $1\times \sigma, 2\times \sigma$ respectively, computed based on the previous 4 weeks) above the four-week moving average (computed based on the previous 4 weeks).  We refer to these atoms as ``spikes'' and show some sample atoms denoting such spikes in Table~\ref{exAt}.  The spikes are designated for three locations: Iraq, Syria, or both theaters combined.  We also included these spikes in the set of environmental atoms as well ($\cala_{act}\subset \cala_{env}$).\smallskip\\

We set the parameters of PF-Rule-Extract as follows:\\ $MaxDim=3$, $SuppLB=3$, $minProb=0.5$.  For PF-Rule-Compare, we did not set a particular $k$ value. Instead, we used $\eavg$ to rank the rules by causality for a given consequence.  In this paper, we report the top causal rules (in terms of $\eavg$) for some of the consequences.  We also note that the number of related rules provides insight into that rule's significance - we discuss this quantity in our analysis.  We examine some of the top rules in terms of $\eavg$ and describe the military insights that they provide.  Our team includes a former military officer with over a decade of experience in military operations that includes two combat tours in Operation Iraqi Freedom.  In addition to $\eavg$, we also examined the rule's probability ($p$) - the fraction of times the consequence follows the precondition, the negative probability ($p^*$) - the fraction of times the consequence is not proceeded by the precondition, the prior probability of the consequence ($\rho$), and the additional causal measures introduced in this paper - $\emin, \efrac$.  The relationship specified in the rule can be considered more significant if $p>>\rho$, $p^*$ is closer to $0$, $\emin \geq 0$, and $\eavg,\efrac$ are close to $1$.  These measures are discussed in more detail in Section~\ref{appr}.

\subsection{Algorithm Efficiency}
In Section~\ref{appr}, we described some performance improvements that we utilized in PF-Rule-Extract (which is based on APT-Extract~\cite{apt11}) that are specific to this application.  The first performance improvement was the reduction in the number of atoms used to generate the preconditions (which were conjunctions of atoms of up to size $3$ in our experiments).  First, we limited the number of atoms to be used in such combinations based on the weeks in which they occurred.  Even though we had $980$ environmental atoms, no more than $93$ occurred during any given week (this is the value $\max_t(n_t)$ from Section~\ref{appr}).  Further, by eliminating atoms that occurred less than the $SuppLB$ from consideration, this lowered it further to $49$ atoms per week at the most.  This directly leads to fewer combinations of atoms generated for the precondition.  A comparison is shown in Table~\ref{comboTable}.  Note that the values for APT-Extract are exact, while the remaining are upper bounds.  Again, this substantial, multiple-order-of-magnitude savings in the number of rules explored is a result of the relative sparseness of our dataset and the fact that our preconditions consisted of conjunctions of positive atoms.  This efficiency is primarily what enabled us to find and compare rules in a matter of minutes on a commodity system.

\begin{table}[h!]
	\caption{\textmd{Improvement in Algorithm Efficiency}}
	\label{comboTable}
	\centering

	\begin{tabular}{|l|l|l|} 
		\hline
		{\bf Technique} &  {\bf Atoms}&  {\bf Combinations} \\ 
		&&{\bf Explored}\\\hline 
		APT-Extract~\cite{apt11} & $980$& $182,122,025$\\\hline
		$\max_t(n_t)$ & $93$ &$8,853,042$\\\hline
		$\max_t(n_t)$ and consider& $49$ & $1,296,834$\\
		only atoms that occur more&&\\
		than $SuppLB$ times && \\
		\hline
		\hline
			\end{tabular}
\end{table}

\subsection{ISIS Military Tactics}
In this section, we investigate rules that provide insight into ISIS's military tactics -- in particular, we found interesting and potentially casual relationships that provide insight into their infantry operations, use of terror tactics (i.e. VBIEDs), and decisions to employ roadside bombs and to launch suicide operations.

\begin{table}[h!]
	\caption{\textmd{Causal Rules for Spikes in Armed Attacks (Iraq and Syria Combined)}}
	\label{armedAttackRules}
	\centering
	\renewcommand{\arraystretch}{1.5}
	
	\begin{tabular}{|c|c|c|c|c|} 
		\hline
		\textbf{No.} &{\bf Precondition} &  {\bf $\eavg$} & {\bf $p$} & {\bf $p^*$} \\ \hline\hline
		\textbf{1.}&$\textsf{indirectFire(}Baiji\prn\wedge$ & $0.81$ & $0.67$ & $0.50$\\ 
	  &$ \textsf{armedAtk(}Balad\prn  $ &&& \\ \hline
		\textbf{2.}&$\textsf{indirectFire(}Baiji\prn \wedge $ & $0.81$ & $0.67$ & $0.50$  \\ 
			&$ \textsf{armedAtk(}Balad\prn\wedge  $ &&& \\
			&$ \textsf{VBIED(}Baghdad\prn  $ &&& \\ \hline
		\hline
	\end{tabular}
\end{table}

\noindent\textbf{Armed Attacks.}  The prior probability of large spikes ($2 \times \sigma$) (see figure \ref{fig:chart1}) for ISIS armed attacks for Iraq and Syria was $0.154$. However, for our two most causal rules for this spike (Table~\ref{armedAttackRules}), we derived this probability as $0.67$. Such spikes likely indicate major infantry operations by the Islamic State.  Rule 1 states that indirect fire at Baiji (attributed to ISIS) along with an armed attack in Balad leads to a spike in armed attacks by the group in the next week while rule 2 mirrors rule 1 but adds VBIED activity in Baghdad as part of the precondition.  We note the relatively high $p^*$ ($0.5$ in this case), which indicates that each spike in armed attacks of this type are not necessarily proceeded by this precondition, despite the relatively high value for $\eavg$, hinting at causality (each of these rules was compared with $265$ related rules and had $\efrac=1$ and $\emin=0$ -- showing little indication of a related rule being more causal).  We believe that the strong causality and the high value for $p^*$ indicate that these rules may well be ``token causes'' -- causes for a specific event - in this case, we think it is likely ISIS offensive operations in Baiji.  This makes sense, as a common military tactic is to prepare the battlefield with indirect fire (which it seems ISIS did in the prior week).  It is unclear if the VBIED incidents in Baghdad are related due to the co-occurrence.  That said, it is notable that VBIED operations are often used as ``terror'' tactics as opposed to part of a sustained operation.  If so, this may have been part of a preparatory phase (that included indirect fire in Baiji), and the purpose of the VBIED events in Baghdad was to prevent additional Iraqi Security Force deployment from Baghdad to Baiji.  We note in one case supporting this rule (on July 25th, 2014) that ISIS also conducted IED attacks on power-lines that support Baghdad - which may have also been designed to hinder deployment of reinforcements.  Here, it is also important to note the ongoing Infantry operations in Balad (specified in the precondition) - which would consume ISIS resources and perhaps make it more difficult to respond to further deployment of government security forces.  Though this is likely a token cause, this may be indicative of ISIS tactics when preparing to concentrate force on certain objectives (in this case Baiji) while maintaining ongoing operations (Balad), as a spike in armed attacks likely indicates a surge of light infantry-style soldiers into the area (a manpower-intensive operation). \smallskip

\begin{figure}[h!]
\caption{ISIS Spikes in Armed Attacks in Iraq and Syria per Week}
	\label{fig:chart1}
\includegraphics[width=8.5cm]{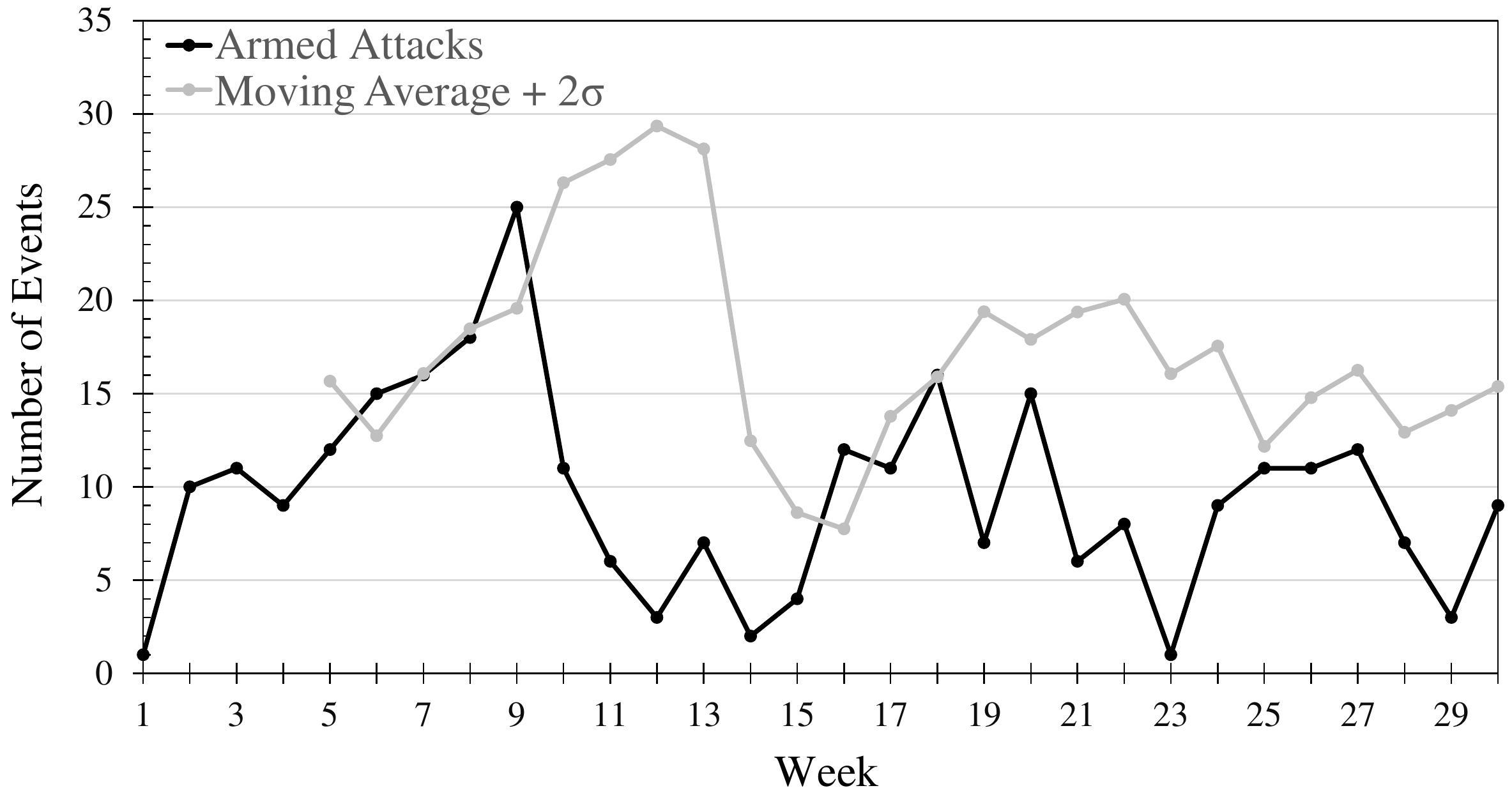}
\end{figure}

\noindent\textbf{Spikes in VBIED Incidents.}  VBIED incidents have been a common terror tactic used by religious Sunni extremist insurgent groups in Iraq in the past -- including Al Qaeda in Iraq, Ansar al Sunnah, Ansar al Islam, and now ISIS.  We found two relationships that are potentially causal for spikes in VBIED activity in Iraq and Syria combined, shown in Table~\ref{vbiedSpikeRules}.

\begin{table}[h!]
	\caption{\textmd{Causal Rules for Spikes in VBIED Incidents in Iraq and Syria Combined}}
	\label{vbiedSpikeRules}
	\centering
	\renewcommand{\arraystretch}{1.5}
	
	\begin{tabular}{|c|c|c|c|c|} 
		\hline
		\textbf{No.} &{\bf Precondition} &  {\bf $\eavg$} & {\bf $p$} & {\bf $p^*$} \\ \hline\hline
		\textbf{3.}&$\textsf{armedAtk(}Balad\prn \wedge $ & $0.95$ & $1.00$ & $0.25$\\ 
		& $\textsf{indirectFire(}Baiji\prn$&&&\\ \hline
		\hline
	\end{tabular}
\end{table}

Rule 3 states that if infantry operations in Balad accompanied by indirect fire operations in Baiji occur, we should expect a major ($2\times \sigma$) spike in VBIED activity (see figure \ref{fig:chart2}) by ISIS (Iraq and Syria combined).  We believe this rule provides further evidence of the use of VBIEDs to pull security forces away from other parts of the operational theater.  It is interesting to note that the negative probability is only $0.25$ - which means that most of the VBIED spikes we observed were related to operations in Balad and Baiji.  This highlights the strategic importance of these two cities to ISIS: Baiji is home to a major oil refinery while Balad is near a major Iraqi air base.  We also found solid evidence of causality for this rule - based on $209$ related rules, we found $\emin=0.5$ which means adding a second precondition to this rule increases the probability of the second rule by at least $0.5$\smallskip

\begin{figure}[h!]
\caption{Spikes in ISIS VBIED Activity in Iraq and Syria per Week}
	\label{fig:chart2}
\includegraphics[width=8.5cm]{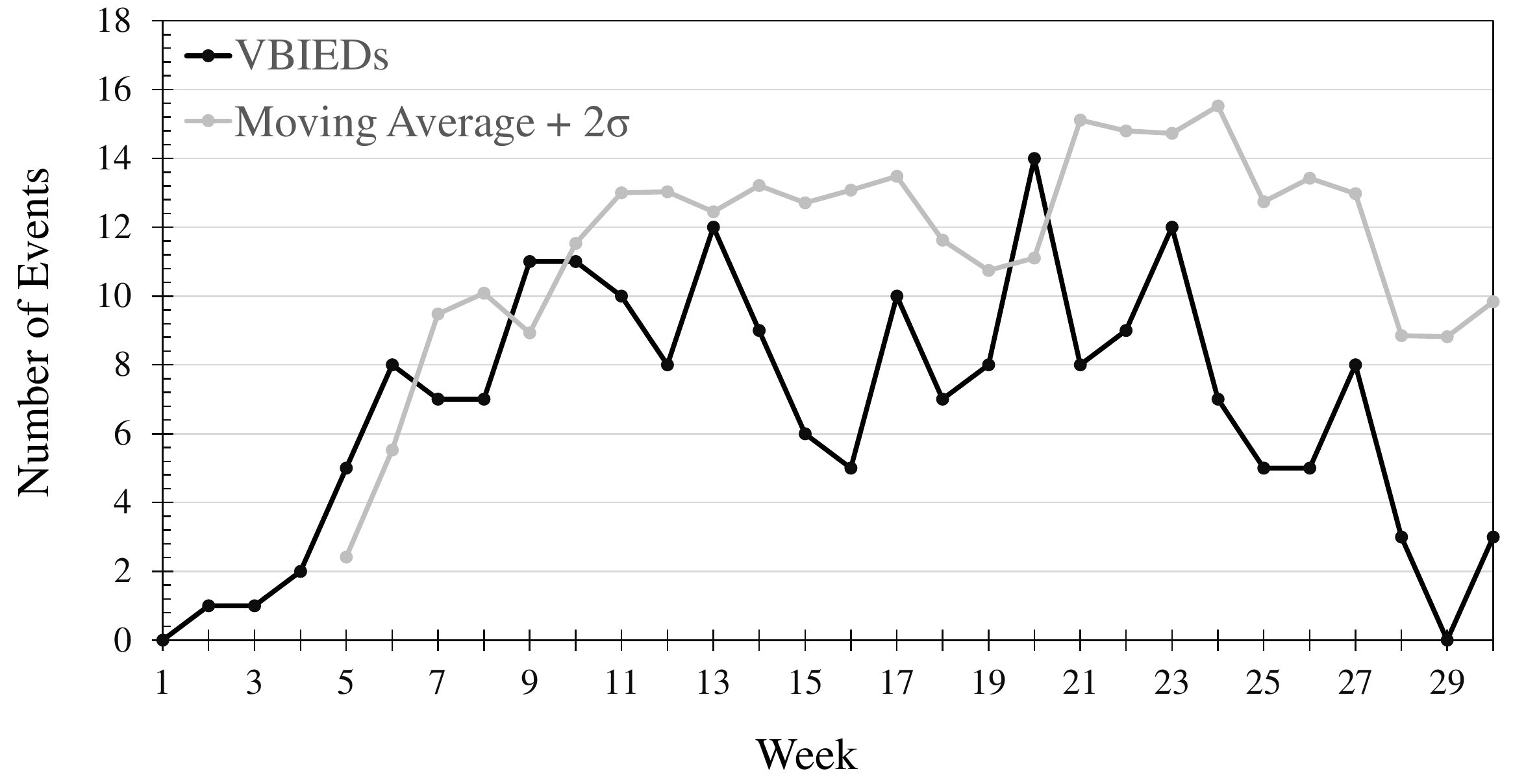}
\end{figure}

\noindent\textbf{Spikes in IED Incidents.}  Improvised explosive device (IED) incidents have been a common tactic used by Iraqi insurgents throughout the U.S.-led Operation Iraqi Freedom.  Though the IED comprises a smaller weapons system normally employed by local insurgent cells, spikes in such activity could be meaningful.  For instance, it may indicate action ordered by a strategic-level command that is being carried out on the city level, or it may indicate improved logistic support to provide local cells the necessary munitions to carry out such operations in larger numbers.  Such spikes only occur with a prior probability of $0.19$.  In Table~\ref{iedSpikeRules}, we show a rather strong precondition for such attacks that consist of infantry operations in Tikrit and a spike in executions.  In this Table, rule 4 states that infantry operations in Tikrit, when accompanied by a spike in executions, lead to a spike ($1\times\sigma$) (see figure \ref{fig:chart3}) in Iraq and Syria combined - with a probability of $1.0$ - much higher than the prior of the consequence.  With $\eavg$ of $0.97$ (based on a comparison with $1180$ other rules), this relationship appears highly causal ($\efrac=1.0, \emin=0.0$) although $40\%$ of $1\times \sigma$ IED spikes were not accounted for by this precondition.

\begin{figure}[h!]
\caption{Spikes in ISIS IEDs in Iraq and Syria per Week}
	\label{fig:chart3}
\includegraphics[width=8.5cm]{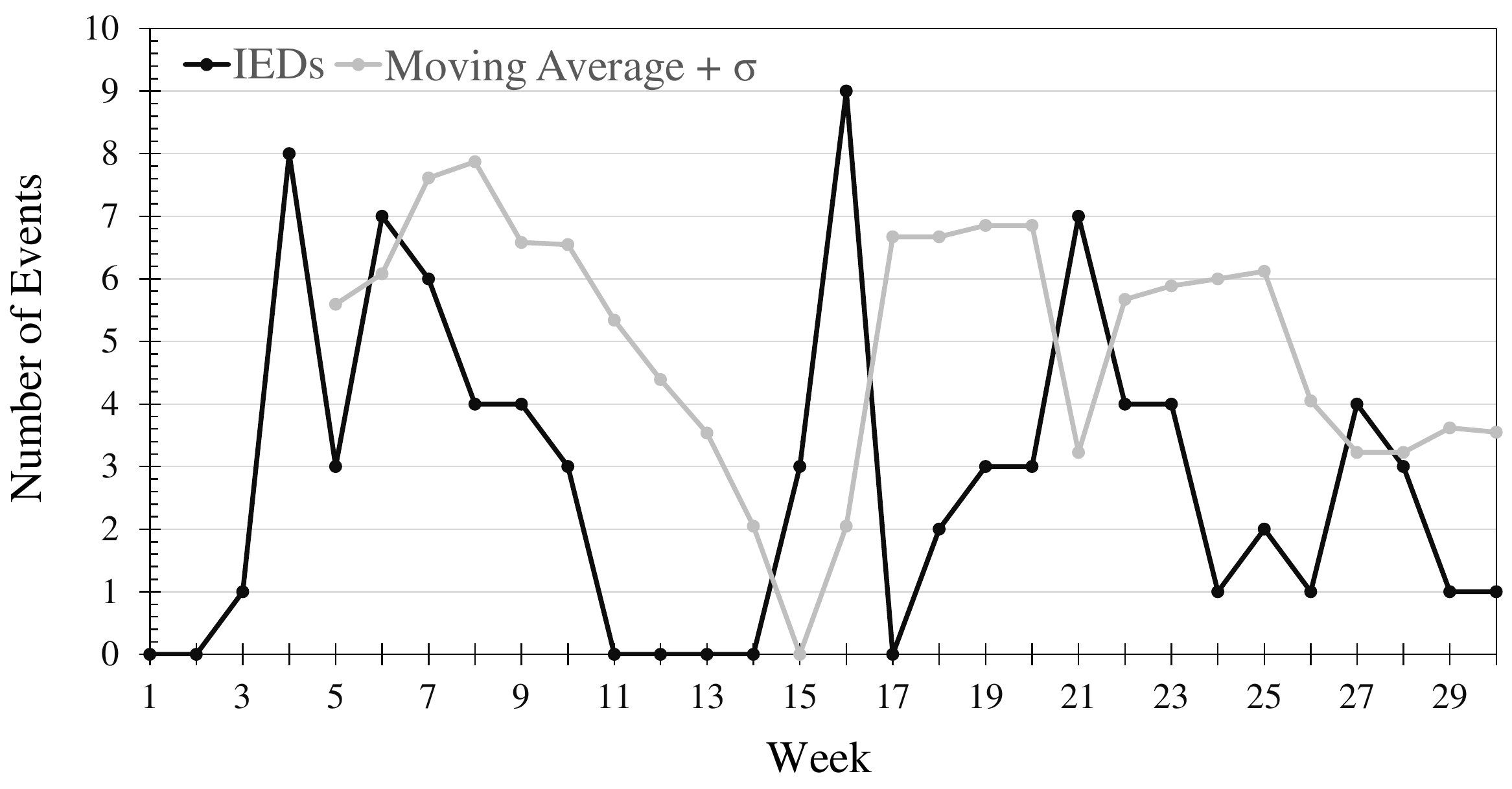}
\end{figure}

\begin{table}[h!]
	\caption{\textmd{Causal Rules for Spikes in IED Incidents in Iraq and Syria Combined}}
	\label{iedSpikeRules}
	\centering
	\renewcommand{\arraystretch}{1.5}
	
	\begin{tabular}{|c|c|c|c|c|} 
		\hline
		\textbf{No.} &{\bf Precondition} &  {\bf $\eavg$} & {\bf $p$} & {\bf $p^*$} \\ \hline\hline
		\textbf{4.}&$\textsf{armedAtk(}Tikrit\prn \wedge $ & $0.97$ & $1.00$ & $0.40$\\ 
		& $\textsf{executionSpike(}Total, 2\sigma\prn$&&&\\ \hline
		\hline
	\end{tabular}
\end{table}

\subsection{Relationships between Iraqi and Syrian Theaters}

ISIS has clearly leveraged itself as a force operating in both Iraq and Syria, and the identification of relationships between incidents in relation to the two theaters may indicate some sophisticated operational coordination on their part.  We have found some evidence of these cross-theater relationships with regard to suicide operations in Iraq (potentially affected by other events in Syria) and VBIED operations in Syria (potentially affected by other operations in Iraq).\smallskip\\

\noindent\textbf{VBIED Spikes in Syria.}  Table~\ref{vbiedSpikeSyria} shows our most causal rules whose consequence is a $1 \times \sigma$ spike (over the four-month moving average) (see figures \ref{fig:chart4}, \ref{fig:chart5}) in ISIS VBIED events in Syria.  Rule 5 provides a precondition of a spike in armed attacks in Iraq that includes a major spike in indirect fire activity while rule 6 has the same precondition but includes an additional VBIED event in Baghdad.  We believe that the spike in armed attacks indicates major ISIS operations in Iraq and the inclusion of indirect fire events also indicates that ISIS soldiers specializing in weapon systems such as mortars may also be concentrated in Iraqi operations.  Taken together, this may indicate a shift in ISIS resources toward Iraq - which may mean that operations have shifted away from Syria.  Hence, VBIED attacks, which once prepared, are less manpower-intensive, nevertheless provide a show-of-force in the secondary theater.  We also note that the probability of these rules ($1.00$) is significantly higher than the prior of these spikes ($0.19$) and that the causality value is also high for each of the $983$ related rules, as the probability either remains the same or increases when considering one of these preconditions (in other words, $\efrac = 1$ and $\emin=0$).\smallskip

\begin{figure}[h!]
\caption{Spikes in ISIS VBIEDs in Syria per Week}
	\label{fig:chart4}
\includegraphics[width=8.5cm]{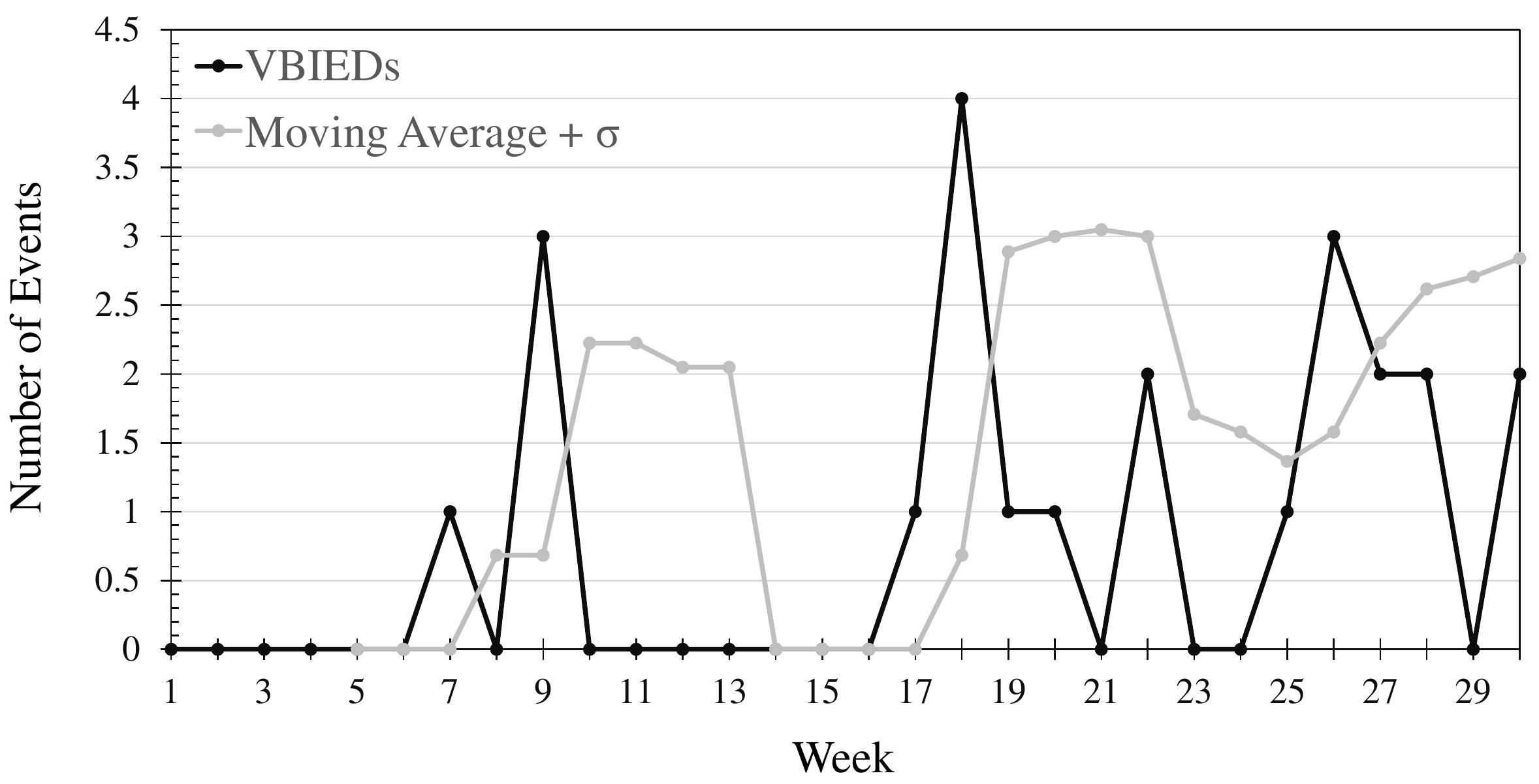}
\end{figure}

\begin{figure}[h!]
\caption{Comparison of ISIS Armed Attacks and VBIEDs in Syria per Week}
	\label{fig:chart5}
\includegraphics[width=8.5cm]{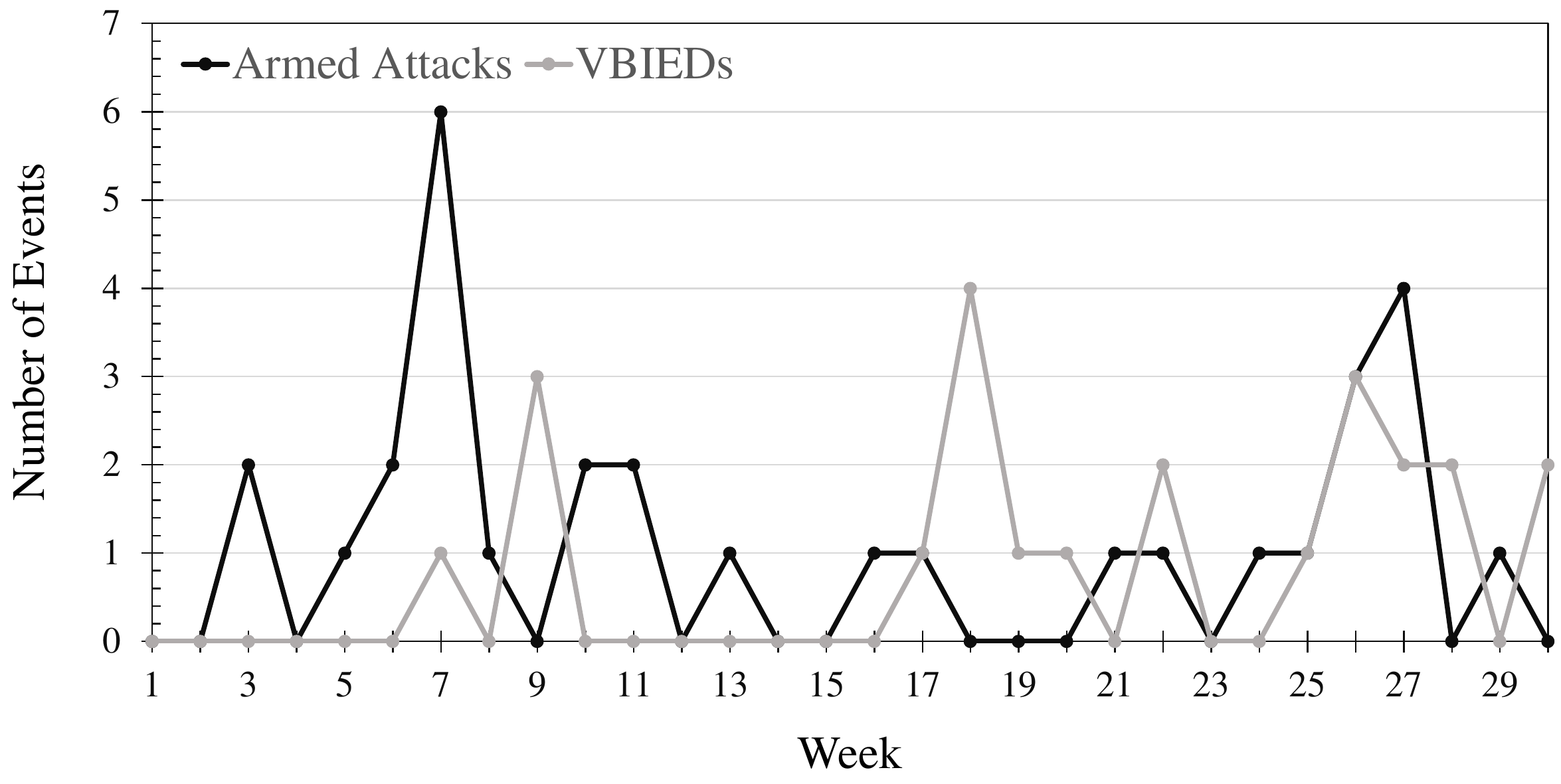}
\end{figure}

\begin{table}[h!]
	\caption{\textmd{Causal Rules for Spikes in VBIED Operations in Syria}}
	\label{vbiedSpikeSyria}
	\centering
	\renewcommand{\arraystretch}{1.5}
	
	\begin{tabular}{|c|c|c|c|c|} 
		\hline
		\textbf{No.} &{\bf Precondition} &  {\bf $\eavg$} & {\bf $p$} & {\bf $p^*$} \\ \hline\hline
		\textbf{5.}&$\textsf{armedAtkSpike(}Iraq,\sigma\prn \wedge $ & $0.92$ & $1.00$ & $0.20$\\ 
		& $\textsf{indirFireSpike(}Iraq,2\sigma\textsf{)}$&&&\\	\hline
		\textbf{6.}&$\textsf{armedAtkSpike(}Iraq,\sigma\prn \wedge $ & $0.92$ & $1.00$ & $0.20$\\ 
		& $\textsf{indirFireSpike(}Iraq,2\sigma\textsf{)}\wedge$&&&\\ 
		& $\textsf{VBIED(}Baghdad\textsf{)}$&&&\\  		\hline
					\hline
	\end{tabular}
\end{table}

\noindent\textbf{Operations Shifting to Syria.}  In rules 5 and 6 of Table~\ref{vbiedSpikeSyria}, we saw how increased operations in Iraq led to the use of VBIEDs in Syria -- perhaps due to a focus on more manpower-heavy operations in Iraq.  Interestingly, rule 7 shown in Table~\ref{idfSyria} indicates that less manpower-intensive operations in Iraq that have a terror component seem to have a causal relationship (based on $\eavg=0.97,\efrac=1.0, \emin=0.5$ found by comparing to $95$ related rules) with significant indirect fire operations in Syria (that occur with a prior probability of $0.08$).  As discussed earlier, indirect fire is normally used to ``prepare the battlefield'' for infantry operations, so this rule may be indicative of a shift in manpower toward Syria - and the use of a VBIED tactic in Baghdad (less manpower-intensive but very sensational) seems to always proceed a major ($2\times\sigma$ over the moving average) spike in indirect fire activity in Syria in our data (hence, a negative probability of $0.0$).\smallskip


\begin{table}[h!]
	\caption{\textmd{Causal Rules for Spikes in Indirect Fire Operations in Syria}}
	\label{idfSyria}
	\centering
	\renewcommand{\arraystretch}{1.5}
	
	\begin{tabular}{|c|c|c|c|c|} 
		\hline
		\textbf{No.} &{\bf Precondition} &  {\bf $\eavg$} & {\bf $p$} & {\bf $p^*$} \\ \hline\hline
		\textbf{7.}&$\textsf{IED(}Baghdad\prn\wedge $ & $0.97$ & $0.67$ & $0.00$\\ 
		& $\textsf{VBIED(}Ramadi\textsf{)}$&&&\\	\hline
					\hline
	\end{tabular}
\end{table}

\subsection{ISIS Activities Related to Opposition Air Strikes}

\begin{table}[h!]
	\caption{\textmd{Causal Rules for Spikes in Arrests (Iraq and Syria Combined)}}
	\label{respToSyr}
	\centering
	\renewcommand{\arraystretch}{1.5}
	
	\begin{tabular}{|c|c|c|c|c|} 
		\hline
		\textbf{No.} &{\bf Precondition} &  {\bf $\eavg$} & {\bf $p$} & {\bf $p^*$} \\ \hline\hline
		\textbf{8.}&$\textsf{airStrike(}SyrianGov,Damascus\prn $ & $0.91$ & $0.67$ & $0.00$\\ 
		\hline
					\hline
	\end{tabular}
\end{table}

\noindent\textbf{Reaction to Syrian Government Air Strikes.}  Rule 8, shown in Table~\ref{respToSyr}, tells us that a $2\times \sigma$ spike in arrests (see figure \ref{fig:chart7}) was \textit{always} ($p^*=0.0$) proceeded by an air strike conducted by the Syrian government.  The prior probability for such a spike in arrests is $0.08$.  Further, we found evidence of causality - the actions of the Syrian government raised the probability of each of $33$ related rules by at least $0.5$.  One potential reason to explain why arrests follow Syrian air operations is that unlike western nations, Syria generally lacks advanced technical intelligence-gathering capabilities to determine targets, and Syria likely relies on extensive human intelligence networks - especially within Iraq and Syria.  Successful targeting from the air by the Syrian government may then indicate that ISIS's counter-intelligence efforts (activities designed to locate spies within its ranks) may have failed and so the organization perhaps then decides to conduct massive arrests.\smallskip

\begin{figure}[h!]
\caption{Spikes in ISIS Arrests in Iraq and Syria per Week}
	\label{fig:chart7}
\includegraphics[width=8.5cm]{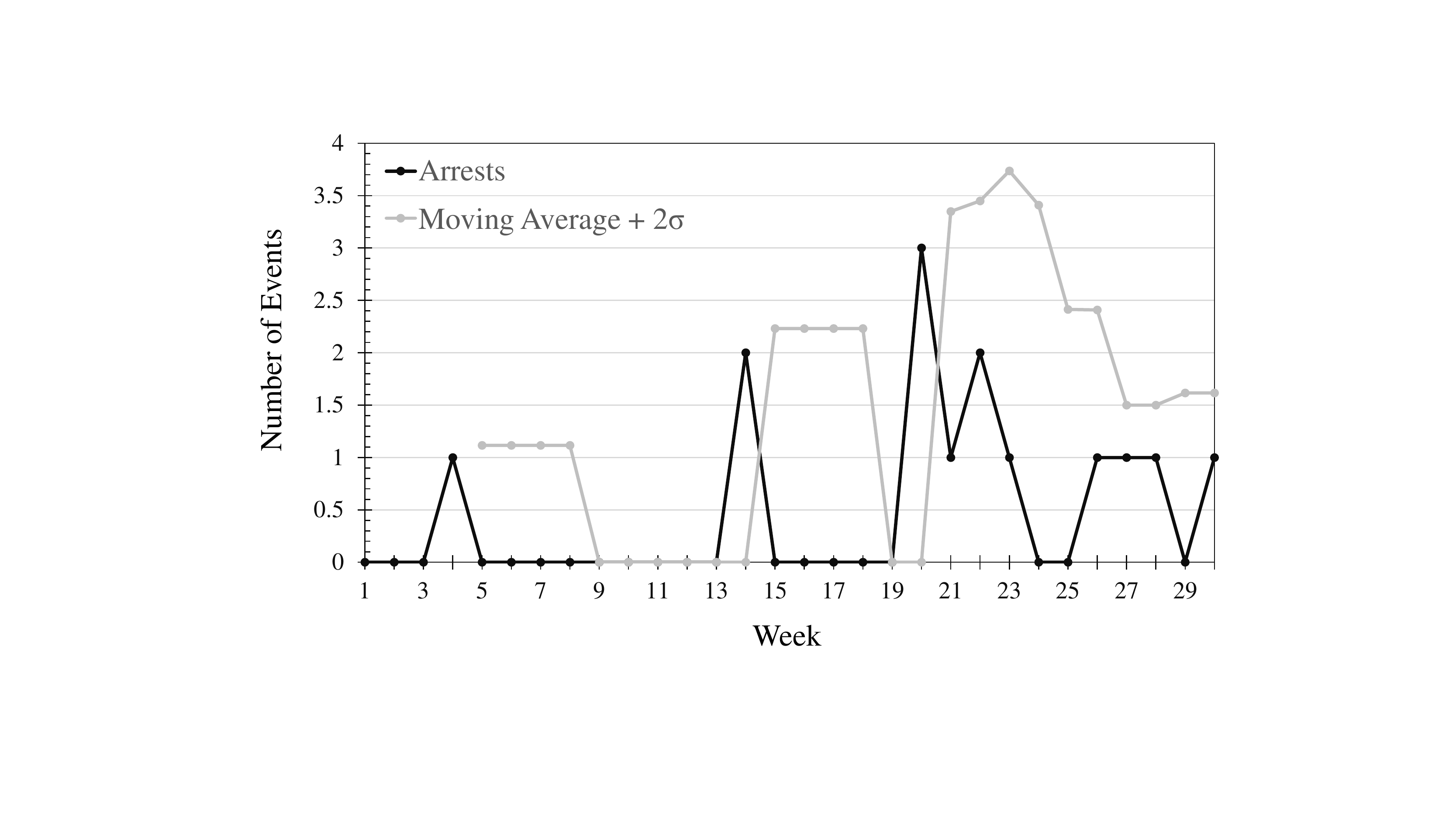}
\end{figure}

\begin{table}[h!]
	\caption{\textmd{Causal Rules for Major Spikes in Suicide Operations in Iraq}}
	\label{suicSpike}
	\centering
	\renewcommand{\arraystretch}{1.5}
	
	\begin{tabular}{|c|c|c|c|c|} 
		\hline
		\textbf{No.} &{\bf Precondition} &  {\bf $\eavg$} & {\bf $p$} & {\bf $p^*$} \\ \hline\hline
		\textbf{9.}&$\textsf{airStrike(}IraqGovt,Baiji\prn\wedge $ & $0.71$ & $0.67$ & $0.60$\\ 
		&$\textsf{armedAtk(}Balad\prn\wedge $ &  &  & \\ 
		&$\textsf{VBIED(}Baghdad\prn $ &  &  & \\ 
				\hline
					\hline
	\end{tabular}
\end{table}

\noindent\textbf{Reaction to Iraqi Government Air Strikes.}  In Table~\ref{suicSpike}, rule 9 tells us that a $2\times \sigma$ spike (see figure \ref{fig:chart8}) in suicide operations in Iraq is related to a precondition involving Iraqi aerial operations in Baiji, ISIS infantry operations in Balad, and a VBIED in Baghdad.  The rule shows that these spikes in suicide operations are $3.5$ times more likely with this precondition. Though $\eavg$ is lower than some of the other rules presented thus far, it has a value for $\emin$ of $0.3$ - the minimum increase in probability afforded to any of the $82$ related rules to which we add the precondition of rule 9.  The precondition indicates that ISIS may have recently expended a VBIED (expensive in terms of equipment) while it has ongoing infantry operations in Balad (expensive in terms of personnel) - hence, it may seek a more economical attack (in terms of both manpower and equipment) that still has a significant terror component - and it would appear that a suicide attack provides a viable option to respond to such air strikes.

\begin{figure}[h!]
\caption{Spikes in ISIS Suicide Operations in Iraq per Week}
	\label{fig:chart8}
\includegraphics[width=8.5cm]{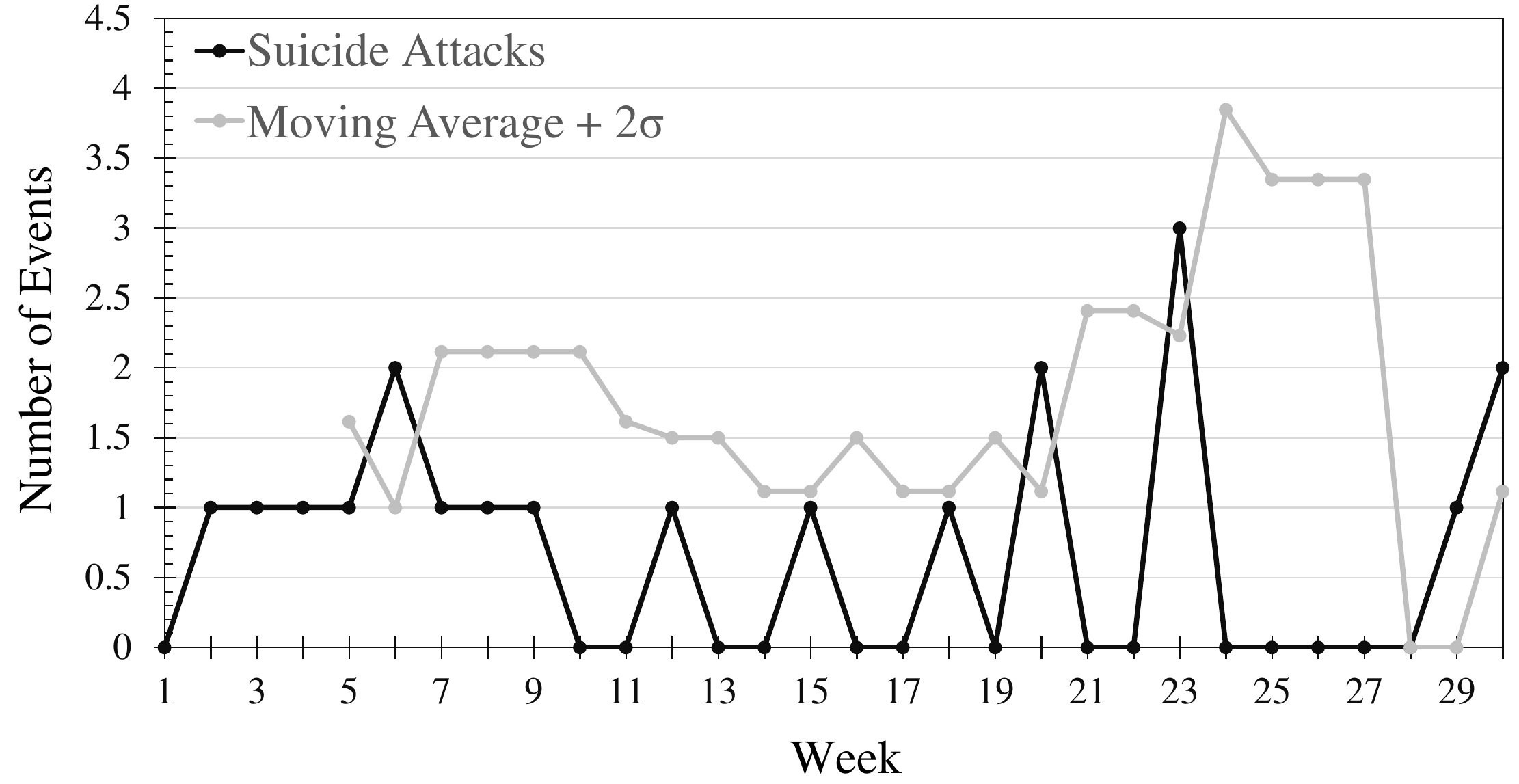}
\end{figure}

\begin{table}[h!]
	\caption{\textmd{Causal Rules for Major Spikes in IED Operations in Iraq}}
	\label{izIedToAir}
	\centering
	\renewcommand{\arraystretch}{1.5}
	
	\begin{tabular}{|c|c|c|c|c|} 
		\hline
		\textbf{No.} &{\bf Precondition} &  {\bf $\eavg$} & {\bf $p$} & {\bf $p^*$} \\ \hline\hline
		\textbf{10.}&$\textsf{airStrike(}Coalition,Mosul\prn\wedge $ & $0.97$ & $0.67$ & $0.33$\\ 
		&$\textsf{armedAtk(}Fallujah\prn $ &  &  & \\ 
				\hline
					\hline
	\end{tabular}
\end{table}

\begin{table}[h!]
	\caption{\textmd{Causal Rules for Major Spikes in IED Operations in Syria}}
	\label{syIedToAir}
	\centering
	\renewcommand{\arraystretch}{1.5}
	
	\begin{tabular}{|c|c|c|c|c|} 
		\hline
		\textbf{No.} &{\bf Precondition} &  {\bf $\eavg$} & {\bf $p$} & {\bf $p^*$} \\ \hline\hline
		\textbf{11.}&$\textsf{airStrike(}Coalition,Mosul\prn\wedge $ & $0.79$ & $0.67$ & $0.33$\\ 
		&$\textsf{armedAtk(}Ramadi\prn\wedge $ &  &  & \\ 
		&$\textsf{armedAtkSpike(}Syria, \sigma\prn $ &  &  & \\ 
				\hline
					\hline
	\end{tabular}
\end{table}

\noindent\textbf{Reaction to Air Strikes by the U.S.-led coalition.}  Rules 10 ($\efrac=1.0, \emin=0.5, 562$ related rules) and 11 ($\efrac=1.0, \emin=0.38, 81$ related rules) - shown in Tables~\ref{izIedToAir} and \ref{syIedToAir} - illustrate two different outcomes from coalition air operations in Mosul.  In both cases, ISIS has active operations in the Al-Anbar province -- both Ramadi and Fallujah have similar demographics.  Hence, the major difference is that the precondition of rule 11 also includes significant infantry operations in Syria.  So, even though both rules result in an increase in IED activity ($2 \times \sigma$ above average, a spike that in both cases occurs with a prior probability of $0.12$) (see figures \ref{fig:chart9}, \ref{fig:chart10}) - the increase occurs in Iraq for rule 10 and Syria for rule 11.  It may be the case that ISIS is increasing IED activity in response to coalition air strikes in the location of their main effort. Further, the relatively low negative probability ($0.33$) along with relatively high values for $\emin$ may indicate that coalition air strikes in Mosul are likely viewed as important factors contributing to ISIS decisions to increase IED activity.  We may also note that the use of IED activity in the aftermath of coalition air strikes could also be due to the size of the weapon.  An IED can be fairly small, easily disassembled and stored in an innocuous location - hence, it is weapon system that the coalition cannot sense and target from the air.

\begin{figure}[h!]
\caption{Spikes in ISIS IEDs in Iraq per Week}
	\label{fig:chart9}
\includegraphics[width=8.5cm]{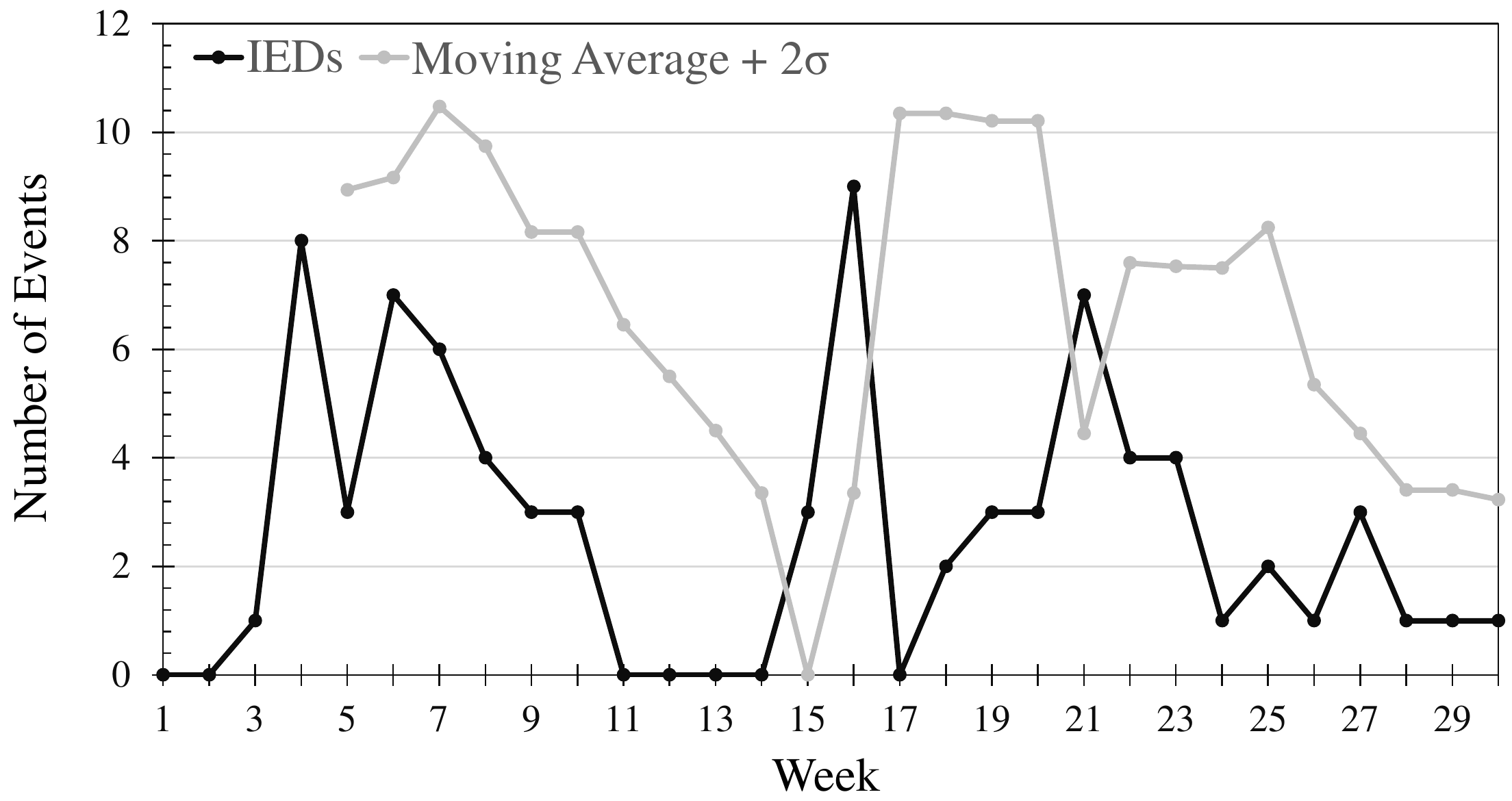}
\end{figure}

\begin{figure}[h!]
\caption{Spikes in ISIS IEDs in Syria per Week}
	\label{fig:chart10}
\includegraphics[width=8.5cm]{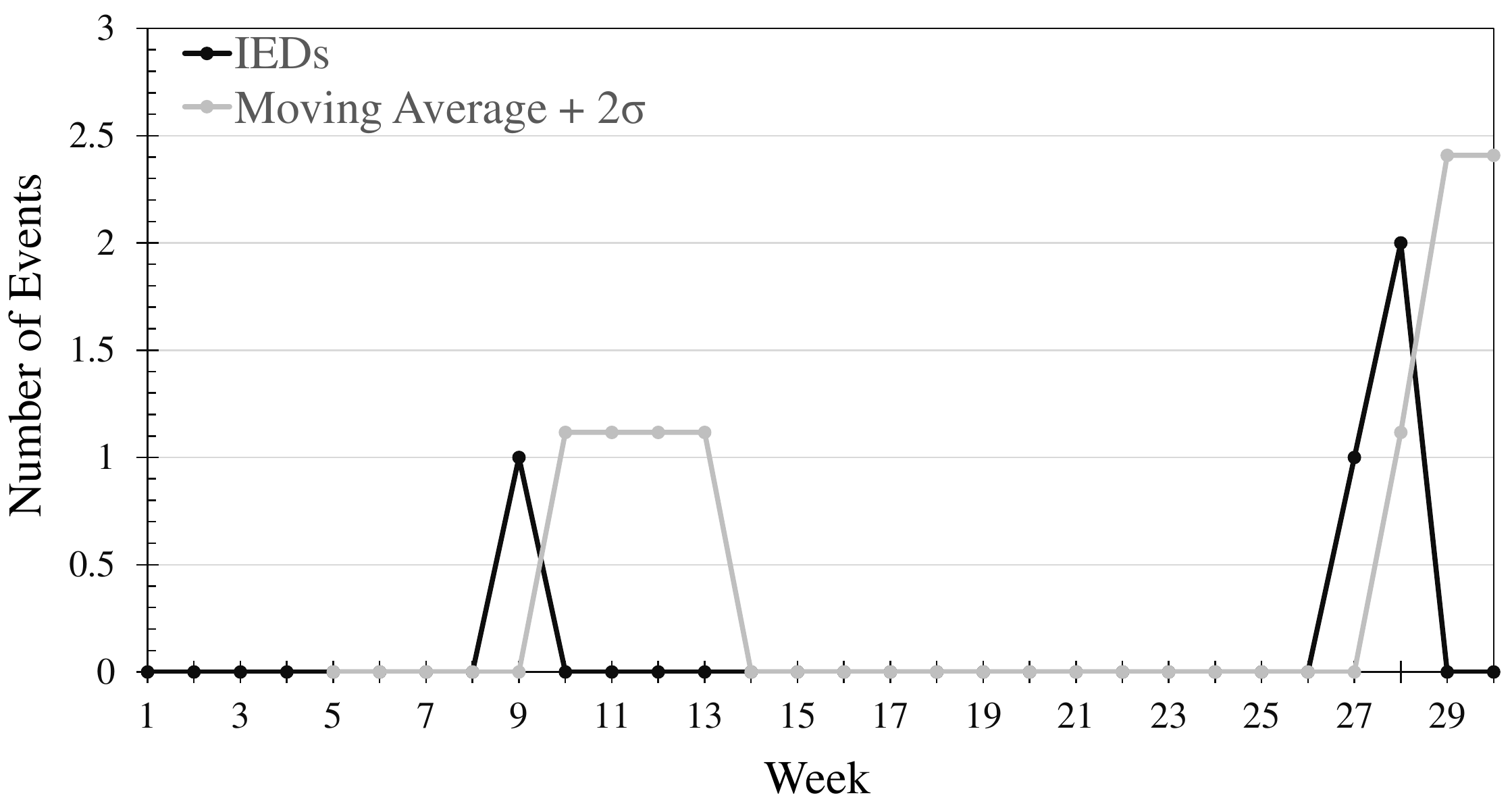}
\end{figure}

\section{Related Work}
\label{rwSec}

To the best of our knowledge, this paper represents the first purely data-driven study of ISIS, and it is also the first to combine the causality framework of \cite{kb09} with APT-Logic.  However, there has been a wealth of research conducted where rule-based systems have been applied to terrorist and insurgent groups, as well as other work on modeling insurgent actions.  Here, we review this related work and also discuss other approaches to causal reasoning.\smallskip\\

\noindent\textbf{Rule-based systems.}  Previously, there has been a variety of research on modeling the actions of terrorist and insurgent groups.  Perhaps most well-known is the Stochastic Opponent Modeling Agents (SOMA)~\cite{soma1,soma}, which was developed with the goal of better understanding different cultural groups along with their behaviors.  This is also based on a rule-based framework known as action-probabilistic (AP) rules~\cite{apPgms07}, which is a predecessor of APT-logic used in this paper~\cite{apt11,apt12}.  The SOMA system produces a probabilistic logic representation for understanding group behaviors and reasoning about the types of actions a group may take. In recent years, this system has been used to better understand terror groups like Hamas~\cite{soma} and Hezbollah~\cite{soma1} through probabilistic rules.  However, it can also be used with cultural, religious, and political groups - as it has been demonstrated with other groups in the Minorities at Risk Organizational Behavior (MAROB) dataset~\cite{marob}.  The work on SOMA was followed by an adoption of APT-Logic~\cite{apt11,apt12}, which extended AP-rules with a temporal component.  APT logic was also applied to the groups of the MAROB dataset as well as Shi'ite Iraqi insurgent groups circa 2007 (present during the American-led Operation Iraqi Freedom)~\cite{apt12}~\footnote{ISIS is a Sunni group and did not exist in its present form in 2007.}.  Perhaps most significantly, APT-Logic was applied to another dataset for the study of the terror groups Lashkar-e-Taiba (LeT) and Indian Mujahideen (IM) in two comprehensive volumes that discuss the policy implications of the behavioral rules learned in these cases~\cite{let13,im13}.  The main similarity between the previous work leveraging SOMA, AP-rules, and/or APT-logic and this paper is that both leverage probabilistic rule learning.  However, all of the aforementioned rule-learning approaches only discover correlations, while this work is focused on rules that are not only of high probability, but also have a likely casual relationship.  We also note that none of this previous work studies ISIS but rather other terrorist and insurgent groups.\smallskip\\

\noindent\textbf{Analysis of insurgent military tactics.}  We also note that the previous rule-learning approaches to understanding terrorist and insurgent behavior have primarily focused on analyzing political events and how they affect the actions of groups such as LeT and IM.  However, with the exception of some of the work on APT-Logic which studies Shi'ite Iraqi insurgents~\cite{apt11}, the aforementioned work is generally not focused on tactical military operations - unlike this paper.  However, there have been other techniques introduced in the literature that have been designed to better account for ground action.  Previously, a statistical-based approach leveraged leaked classified data~\cite{Zammit-Mangion31072012}~\footnote{Resulting from WikiLeaks in 2010.} to predict trends of violent activity during the American-led Operation Enduring Freedom in Afghanistan.  However, this work was primarily focused on prediction and not identifying causal relationships of interest to analysis as in this work.  We also note that this work does not rely on the use of leaked classified data, but rather on open-source information.  Another interesting approach is the model-based approach of \cite{gian14}, in which subject matter experts create models of insurgent behavior using a combination of fuzzy cognitive maps and complex networks to run simulations and study ``what-if'' scenarios.  However, unlike  this paper, the work of \cite{gian14} is not as data-driven an approach.\smallskip\\

\noindent\textbf{Causal reasoning.}  The comparison of rules by $\eavg$ as a measure of causality was first introduced in~\cite{kb09} and further studied in \cite{kb11}.  It draws on the philosophical ideas of \cite{suppes70}.  However, this work has primarily looked at preconditions as single atomic propositions.  Further, it did not explore the issue of efficiency.  As we look to identify rules whose precondition consists of more than one atomic proposition, we introduce a rule-learning approach.  Further, moving beyond previous rule-learning approaches such as that introduced in \cite{apPgms07} and \cite{apt11}, we show practical techniques to improve efficiencies of such algorithms.  Further, we also improve upon the efficiency of computing $\eavg$ - a practical issue not discussed in \cite{kb09,kb11}.  We note that neither this previous work on causality nor APT-logic makes independence assumptions.  This differs substantially from earlier work on causality such as \cite{cau00}, which relies on graphical models that make relatively strong independence assumptions.  Our goal was to avoid such structures in order to provide a more purely data-driven approach.  A key aspect of this work as opposed to previous studies on causality is that we leverage the intuitions of APT logic and AP rules, in which the rule-learning algorithm searches for combinations of atomic propositions that are related to a given consequence.

\section{Conclusion}
In this paper, we conducted a data-driven study on the insurgent group ISIS using a combination of APT-logic, rule learning, and causal reasoning to identify cause-and-effect behavior rules concerning the group's actions.  We believe our approach is of significant utility for both military decision making and the creation of policy.  In the future, we look to extend this work in several ways: first, we look to create non-ground rules that generalize some of the preconditions further.  This would allow us to understand the circumstances in which ISIS conducts general operations (as opposed to operations specific to a given city or to a geographic area).  We also look to study more complex temporal relationships -- possibly using a more fine-grain resolution for time and studying rules where the cause is followed by the effect in more than one unit of time.  We also look to leverage additional variables about the environment, including data about weather, information (including social media operations), and the political situation to find more interesting relationships.


\begin{thebibliography}{10}

\bibitem{marob}
V.~Asal, C.~Johnson, and J.~Wilkenfeld.
\newblock Ethnopolitical violence and terrorism in the middle east.
\newblock In J.~J. Hewitt, J.~Wilkenfeld, and T.~R. Gurr, editors, {\em Peace
  and Conflict 2008}. Paradigm Publishers, 2007.

\bibitem{hr14}
Districts of iraq.
\newblock United Nations Office for the Coordination of Humanitarian Affairs,
  2014.
\newblock
  \url{http://www.humanitarianresponse.info/operations/iraq/search?search=governorate+district}.

\bibitem{ggmps}
Google maps.
\newblock Google Inc., 2014.
\newblock https://www.google.com/maps/.

\bibitem{ma14i}
Iraq - district reference maps.
\newblock MapAction, 2014.
\newblock
  \url{http://www.mapaction.org/component/search/?searchword=iraq+map&ordering=newest&searchphrase=all&limit=100&areas[0]=maps}.

\bibitem{isw14}
Situation reports.
\newblock Institute for the Study of War, 2014.
\newblock http://understandingwar.org/.

\bibitem{ma14s}
Syria: Governorate and district reference maps.
\newblock MapAction, 2014.
\newblock \url{http://www.mapaction.org/district-reference-maps-of-syria.html}.

\bibitem{gian14}
P.~Giabbanelli.
\newblock Modelling the spatial and social dynamics of insurgency.
\newblock {\em Security Informatics}, 3:2, 2014.

\vfill\eject

\bibitem{apPgms07}
S.~Khuller, M.~V. Martinez, D.~Nau, A.~Sliva, G.~I. Simari, and V.~S.
  Subrahmanian.
\newblock Computing most probable worlds of action probabilistic logic
  programs: scalable estimation for $10^{30,000}$ worlds.
\newblock {\em Annals of Mathematics and Artificial Intelligence},
  51(2-4):295--331, 2007.

\bibitem{kb11}
S.~Kleinberg.
\newblock A logic for causal inference in time series with discrete and
  continuous variables.
\newblock In {\em {IJCAI} 2011, Proceedings of the 22nd International Joint
  Conference on Artificial Intelligence, Barcelona, Catalonia, Spain, July
  16-22, 2011}, pages 943--950, 2011.

\bibitem{kb09}
S.~Kleinberg and B.~Mishra.
\newblock The temporal logic of causal structures.
\newblock In {\em {UAI} 2009, Proceedings of the Twenty-Fifth Conference on
  Uncertainty in Artificial Intelligence, Montreal, QC, Canada, June 18-21,
  2009}, pages 303--312, 2009.

\bibitem{soma1}
A.~Mannes, M.~Michael, A.~Pate, A.~Sliva, V.~S. Subrahmanian, and
  J.~Wilkenfeld.
\newblock Stochastic opponent modeling agents: A case study with Hezbollah.
\newblock {\em First International Workshop on Social Computing, Behavioral
  Modeling, and Prediction}, April 2008.

\bibitem{soma}
A.~Mannes, A.~Sliva, V.~S. Subrahmanian, and J.~Wilkenfeld.
\newblock Stochastic opponent modeling agents: A case study with Hamas.
\newblock {\em Proceedings of the Second International Conference on
  Computational Cultural Dynamics (ICCCD)}, September 2008.

\bibitem{cau00}
J.~Pearl.
\newblock {\em Causality: Models, Reasoning, and Inference}.
\newblock Cambridge University Press, 2000.

\bibitem{apt11}
P.~Shakarian, A.~Parker, G.~I. Simari, and V.~S. Subrahmanian.
\newblock Annotated probabilistic temporal logic.
\newblock {\em ACM Transactions on Computational Logic}, 2011.

\bibitem{apt12}
P.~Shakarian, G.~I. Simari, and V.~S. Subrahmanian.
\newblock Annotated probabilistic temporal logic: Approximate fixpoint
  implementation.
\newblock {\em ACM Transactions on Computational Logic}, 2012.

\bibitem{im13}
V.~S. Subrahmanian, A.~Mannes, A.~Roul, and R.~K. Raghavan.
\newblock {\em Indian Mujahideen - Computational Analysis and Public Policy},
  volume~1 of {\em Terrorism, Security, and Computation}.
\newblock Springer, 2013.

\bibitem{let13}
V.~S. Subrahmanian, A.~Mannes, A.~Sliva, J.~Shakarian, and J.~P. Dickerson.
\newblock {\em Computational Analysis of Terrorist Groups: Lashkar-e-Taiba}.
\newblock Springer, 2013.

\bibitem{suppes70}
P.~Suppes.
\newblock {\em A probabilistic theory of causality}.
\newblock North-Holland Pub. Co., 1970.

\bibitem{Zammit-Mangion31072012}
A.~Zammit-Mangion, M.~Dewar, V.~Kadirkamanathan, and G.~Sanguinetti.
\newblock Point process modelling of the afghan war diary.
\newblock {\em Proceedings of the National Academy of Sciences},
  109(31):12414--12419, 2012.

\end{thebibliography}
\end{document}